\newcommand{\be}{\begin{equation}}
\newcommand{\ee}{\end{equation}}
\newcommand{\ba}{\begin{eqnarray}}
\newcommand{\ea}{\end{eqnarray}}
\DeclareMathOperator{\tr}{tr}
\DeclareMathOperator{\card}{card}
\newcommand\stab{{\operatorname{STAB}}}
\newcommand{\ignore}[1]{}
\newcommand{\st}[1]{\ket{#1}\bra{#1}}
\newcommand{\otoc}[0]{\text{OTOC}}
\newcommand{\parent}[1]{\left( {#1} \right)}
\newcommand{\aver}[1]{ \left\langle  {#1}  \right\rangle }
\newcommand{\sym}{\text{sym}}
\newcommand{\averu}[1]{\mathbb{E}_{U}\left[#1\right]}
\def\norm#1{\Vert #1\Vert}
\def\CC{{\rm\kern.24em \vrule width.04em height1.46ex depth-.07ex
    \kern-.29em C}}
\def\P{{\rm I\kern-.25em P}}
\def\RR{{\rm
         \vrule width.04em height1.58ex depth-.0ex
         \kern-.04em R}}
\def\bbbone{{\mathchoice {\rm 1\mskip-4mu l} {\rm 1\mskip-4mu l}
{\rm 1\mskip-4.5mu l} {\rm 1\mskip-5mu l}}}
\def\bbbc{{\mathchoice {\setbox0=\hbox{$\displaystyle\rm C$}\hbox{\hbox
to0pt{\kern0.4\wd0\vrule height0.9\ht0\hss}\box0}}
{\setbox0=\hbox{$\textstyle\rm C$}\hbox{\hbox
to0pt{\kern0.4\wd0\vrule height0.9\ht0\hss}\box0}}
{\setbox0=\hbox{$\scriptstyle\rm C$}\hbox{\hbox
to0pt{\kern0.4\wd0\vrule height0.9\ht0\hss}\box0}}
{\setbox0=\hbox{$\scriptscriptstyle\rm C$}\hbox{\hbox
to0pt{\kern0.4\wd0\vrule height0.9\ht0\hss}\box0}}}}
\def\bbbz{{\mathchoice {\hbox{$\sf\textstyle Z\kern-0.4em Z$}}
{\hbox{$\sf\textstyle Z\kern-0.4em Z$}}
{\hbox{$\sf\scriptstyle Z\kern-0.3em Z$}}
{\hbox{$\sf\scriptscriptstyle Z\kern-0.2em Z$}}}}
\begin{document}
\setcounter{secnumdepth}{3}

 \title{Stabilizer R\'enyi Entropy}
\author{Lorenzo Leone}\email{Lorenzo.Leone001@umb.edu}
\affiliation{Physics Department,  University of Massachusetts Boston,  02125, USA}
\author{Salvatore F.E. Oliviero}
\affiliation{Physics Department,  University of Massachusetts Boston,  02125, USA}
\author{Alioscia Hamma}
\affiliation{Physics Department,  University of Massachusetts Boston,  02125, USA}
\affiliation{Université Grenoble Alpes, CNRS, LPMMC, 38000 Grenoble, France}
\begin{abstract}
We introduce a novel measure for the quantum property of nonstabilizerness - commonly known as ``magic'' - by considering the R\'enyi entropy of the probability distribution associated to a pure quantum state given by the square of the expectation value of Pauli strings in that state. We show that this is a good measure of nonstabilizerness from the point of view of resource theory and show bounds with other known measures. The stabilizer R\'enyi entropy has the advantage of being easily computable because it does not need a minimization procedure. We present a protocol for an experimental measurement by randomized measurements. We show that the nonstabilizerness is intimately connected to out-of-time-order correlation functions and that maximal levels of nonstabilizerness are necessary for quantum chaos.
\end{abstract}
 \maketitle

{\em Introduction}.--- Quantum physics is inherently different from classical physics and this difference comes in two layers. First, quantum correlations are stronger than classical correlations and do violate Bell's inequalities\cite{bell64epr,bell1966hidden}. Classical physics can only violate Bell's inequalities at the expense of locality. Second, based on the assumption that $P\ne NP$, quantum physics is exponentially harder to simulate than classical physics\cite{Shor1997Polynomial}. The theory of quantum computation is based on the fact that, by harnessing this complexity, quantum computers would be exponentially faster at solving certain computational tasks\cite{Shor1997Polynomial,grover1996quantum, Childs2010quantum, Lloyd1996simul, Childs2018towards}. 

It is a striking fact that these two layers have a hierarchy: entanglement can be created by means of quantum circuits that can be efficiently simulated on a classical computer\cite{Gottesman:1998hu}. These states are called stabilizer states ($\stab$) and they constitute the orbit of the Clifford group, that is, the normalizer of the Pauli group. Therefore, starting from states in the computational basis, quantum circuits with gates from the Clifford group can be simulated on a classical computer, in spite of being capable of making highly entangled states. The second layer of quantumness thus needs non-Clifford gates. These resources are necessary to unlock quantum advantage. Since there is never a free lunch, non-Clifford resources are harder to implement both at the experimental level and for the sake of error correction\cite{Shor1995dec, Shor1996error, Bennett1996error, Knill1996codes,1997PhDT.......232G}.  Understanding nonstabilizerness in quantum states is of fundamental importance to understand the achievable quantum advantage in schemes of quantum computing\cite{shor1996fault,gottesmann1998fault,Kitaev2003fault,Campbell2017fault} or other quantum information protocols\cite{harrow2004family,Nielsen}. Resource theory of nonstabilizerness has recently found copious applications in magic state distillation and non-Clifford gate synthesis\cite{Campbell2011Catalysis,howard2017application,beverland2019lower,seddon2021quantifying}, as well as classical simulators of quantum computing architectures\cite{bravyi2016improved,bravyi2016resources,bravyi2019simulation,seddon2021quantifying}.

In a broader context, one would like to know what is the bearing of this second layer of quantumness on other fields of physics: from black holes and quantum chaos\cite{leone2021quantum,liu2020many} to quantum many-body theory\cite{liu2020many}, entanglement theory\cite{zhou2020single}, and quantum thermodynamics\cite{YungerHalpern2017}. 

Standard measures of nonstabilizerness are based on general resource theory considerations. A good measure must be stable under operations that send stabilizer states into stabilizer states and faithful, that is,   stabilizer states (and only those) must return zero. Known measures of nonstabilizerness either involve computing an extreme over all the possible stabilizer decompositions of a state and are therefore very hard to compute or cannot anyway be seen as expectation values of an observable\cite{howard2017application,beverland2019lower,liu2020many}.

In this Letter, we define a measure of nonstabilizerness  as the R\'enyi entropy associated to the probability of a state being represented by a given Pauli string. Computing this quantity does not involve a minimization procedure. We also present a protocol for its experimental measurement based on randomized measurements\cite{Enk2012measure,Elben2018Entropies,Brydges2019randomized,Elben2019probing,Elben2020randomized,Zhou2020Estimation}.
We show that, in the context of state synthesis, $O(n)$ magic states are necessary to prepare a Haar-random state. Then we proceed to investigate how much stabilizer entropy a unitary operator can achieve on average on the stabilizer states, that is, the free resources and finally we show that the nonstabilizing power of a quantum evolution can be cast in terms of out-of-time-order correlation functions (OTOCs) and that is thus a necessary ingredient of quantum chaos.

{\em Stabilizer R\'enyi Entropy}.---
In this section, we define a family of nonstabilizerness measures for pure states. Let $\tilde{\mathcal{P}}_{n}$ be the group of all $n$-qubit Pauli strings with phases $\pm 1$ and $\pm i$; then let $\mathcal{P}_n:=\tilde{\mathcal{P}}_{n}/ \aver{\pm i \bbbone}$, the quotient group containing all $+1$ phases and define  $\Xi_{P}(\ket{\psi}):= d^{-1}\bra{\psi}P\ket{\psi}^2$ as the squared (normalized) expectation value of $P$ in the pure state $|\psi\rangle$, with $d\equiv 2^{n}$ the dimension of the Hilbert space of $n$ qubits. Note that
$\sum_{P\in\mathcal{P}_n}\Xi_{P}(\ket{\psi})=\tr\st\psi^2=1$. 
Thus, since $\Xi_{P}(\ket{\psi})\ge 0$ and sum to one,  $\{\Xi_{P}({\ket{\psi}})\}$ is a probability distribution. We can see $\Xi_{P}(\ket{\psi})$ as the probability of finding $P$ in the representation of the state $|\psi\rangle$. We can now define the  $\alpha$-R\'enyi entropies associated to this probability distribution as
\be
M_{\alpha}(\ket{\psi}):=(1-\alpha)^{-1}\log \sum_{P\in\mathcal{P}_n}\Xi_{P}^{\alpha}(\ket{\psi})-\log d,
\ee
where we have introduced a shift of $-\log d$ for convenience. Now let $\Xi(\ket{\psi})$, the vector with $d^2$ entries labeled by $\Xi_{P}(\ket{\psi})$; then we can rewrite the stabilizer $\alpha$-R\'enyi entropy in terms of its $l_{\alpha}$ norm as:
\be
M_{\alpha}(\ket{\psi})=\alpha(1-\alpha)^{-1}\log \norm{\Xi(\ket{\psi})}_{\alpha}-\log d
\ee
The stabilizer R\'enyi entropy is a good measure from the point of view of resource theory. Indeed, it has the following properties:
(i) faithfulness: $M_{\alpha}(\ket{\psi})=0$ iff $\ket{\psi}\in\stab$, otherwise $M_{\alpha}(\ket{\psi})>0$.
    (ii) stability under free operations $C\in\mathcal{C}(\mathcal{H})$: $M_{\alpha}(C\ket{\psi})=M_{\alpha}(\ket{\psi})$; and
   (iii) additivity: $M_{\alpha}(\ket{\psi}\otimes \ket{\phi})= M_{\alpha}(\ket{\psi})+M_{\alpha}(\ket{\phi})$.
The proof can be found in\cite{supp1}.
We are particularly interested in the case  $\alpha=2$:
\be
M_{2}(\ket{\psi})=-\log d\norm{\Xi(\ket{\psi})}_{2}^{2}.
\ee
The above quantity can be rewritten in terms of the projector $Q:=d^{-2}\sum_{P\in\mathcal{P}_n}P^{\otimes 4}$ as:
$M_{2}(\ket{\psi})=-\log d\tr(Q\st{\psi}^{\otimes 4})$. The stabilizer $\alpha$-R\'enyi entropies are upper bounded as
$M_{\alpha}(\ket{\psi})\le \log d$. The proof is elementary:
from the hierarchy of R\'enyi entropies we have that for any $\alpha >0$, $M_{\alpha}(\ket{\psi})\le S_{0}(\ket{\psi})\equiv \log \card({\ket{\psi}})/d$ and then note that $\card({\ket{\psi}})\le d^{2}$, where $\card(\ket{\psi})$ is the number of nonzero entries of $\Xi(\ket{\psi})$. This bound is generally quite loose for pure states. For the stabilizer $2$-R\'enyi entropy we
can obtain a tighter bound: $M_{2}(\ket{\psi})< \log(d+1) -\log 2 $. This is easy to see by picking an Hermitian operator $\rho$ and setting  $\Xi_{\bbbone}({\rho}):=\tr(\rho)=d^{-1}$ and $\Xi_{P}({\rho}):=\tr(P\rho)=d^{-1}(d+1)^{-1}$ for all $P\neq \bbbone$ which maximizes the $2$-R\'enyi entropy by keeping  $\tr\rho=1$ and $\tr\rho^2=1$, although $\rho$ results being nonpositive in general\cite{zhu2016clifford}.

Another useful measure of nonstabilizerness is given by the stabilizer linear entropy, defined as
\be
M_{\text{lin}}(\ket{\psi}):=1-d\norm{\Xi(\ket{\psi})}_{2}^{2}
\ee
which obeys the following properties:
(i) faithfulness: $M_{\text{lin}}(\ket{\psi})=0$ iff $\ket{\psi}\in\stab$, otherwise $M_{\text{lin}}(\ket{\psi})>0$.
   (ii) stability under free operations $C\in\mathcal{C}(\mathcal{H})$: $M_{\text{lin}}(C\ket{\psi})=M_{\text{lin}}(\ket{\psi})$; and
   (iii) upper bound: $M(\ket{\psi})< 1-2(d+1)^{-1}$. 
The proofs are easy consequences of the previous considerations. 

Let us now show how this measure compares to other measures:   the stabilizer nullity\cite{beverland2019lower,jiang2021lower} is defined as $  \nu(\ket{\psi}):=\log d-\log |St(\ket{\psi})|$, where $St(\ket{\psi}):=\{P\in\mathcal{P}_n\,|\, P\ket{\psi}=\pm\ket{\psi}\}$. 

{\bf Proposition:} The stabilizer $\alpha$ R\'enyi entropies are upper bounded by the stabilizer nullity:
\be
 M_{\alpha}(\ket{\psi})\le \nu(\ket{\psi}).\label{alphaR\'enyibound}
\ee
The proof can be found in\cite{supp1}. Notice that for $\alpha=1/2$, the R\'enyi entropy reduces to $M_{1/2}(\ket{\psi})=2\log\mathcal{D}(\ket{\psi})
$, where $\mathcal{D}(\ket{\psi}):=d^{-1}\sum_{P\in\mathcal{P}_n}|\tr(P\st{\psi})|$ is the stabilizer norm defined in\cite{Campbell2011Catalysis}.
More generally, the $\alpha$-R\'enyi entropies (with $\alpha\ge{1}/{2}$) can be upper bounded by twice the logfree robustness of magic\cite{howard2017application} $\mathcal{R}(\ket{\psi}):=\min_{x}\{\norm{x}_1\,|\, \st{\psi}=\sum_{i}x_{i}\sigma_i,\,\sigma_i\in \stab\}$: $M_{\alpha}(\ket{\psi})\le 2\log \mathcal{R}(\ket{\psi})$. The proof of this inequality follows straightforwardly from the hierarchy of R\'enyi entropies and from the bound proven in\cite{howard2017application}: $\mathcal{D}(\ket{\psi})\le \mathcal{R}(\ket{\psi})$ for any state $\ket{\psi}$.

{\bf Example:} In order to understand the advantages of the stabilizer R\'enyi entropy in terms of its computability, let us now compute it for $n$ copies of the magic state $\ket{H}=\frac{1}{\sqrt{2}}(\ket{0}+e^{i\pi/4}\ket{1})$. A straightforward calculation, see\cite{supp1}, yields $M_{\alpha}(\ket{H}^{\otimes n})=(1-\alpha)^{-1}(n\log(2^{1-\alpha}+1)- n)$. 

{\em State synthesis.---}
One of the most useful applications of the resource theory of nonstabilizerness is state synthesis\cite{Campbell2011Catalysis,bravyi2016resources,Campbell2017fault,howard2017application,beverland2019lower}. The main idea is that, given a measure $M$ of nonstabilizerness and two quantum states $\ket{A}$ and $\ket{B}$, if $M(\ket{A})<M(\ket{B})$ one cannot synthesize $\ket{B}$ starting from $\ket{A}$ using stabilizer operations. In this context, we use the stabilizer $2$-R\'enyi entropy to obtain a lower bound on a synthesis of a Haar-random state.

{\bf Theorem (Informal):}  With overwhelming probability, $O(n)$ copies of the magic state $\ket{H}$ are necessary to synthesize a $n$-qubit Haar-random state.

The formal statement and the formal proof can be found in \cite{supp1}.
 
{\em Measuring stabilizer R\'enyi entropy.---} An important feature of the stabilizer $2$-R\'enyi entropy is that it is amenable to be measured in an experiment. As the purity can be measured  via a randomized measurements protocol\cite{Brydges2019randomized,Elben2019probing,Zhou2020Estimation}, we show that a suitable randomized measurements of Clifford operators can return $M_2$. Let $\ket{\psi}$ be the quantum pure state. Randomly choose an operator $C\in\mathcal{C}(2^n)$ and operate it on the state $C\ket{\psi}$, then measure $C\ket{\psi}$ in the computational basis $\{\ket{\mathbf{s}}\}\equiv \{s=0,1\}^{\otimes n}$. For a given $C$, by repeated measurements one can estimate the probability $P(s|C):=|\langle s|C|\psi\rangle|^2$. Define the vector of four $n$-bit strings $\vec{\mathbf{s}}=(\mathbf{s}_1,\mathbf{s}_2,\mathbf{s}_3,\mathbf{s}_4)$ and denote the binary sum of these strings as $\|\vec{\mathbf{s}}\|\equiv \mathbf{s}_1\oplus\mathbf{s}_2\oplus\mathbf{s}_3\oplus\mathbf{s}_4 $. Then the stabilizer  $2$-R\'enyi entropy is equal to
\be
M_{2}(\ket{\psi})=-\log \sum_{\vec{\mathbf{s}}}(-2)^{-\|\vec{\mathbf{s}}\|}\mathcal{Q}(\vec{\mathbf{s}})-\log d
\ee
where $\mathcal{Q}(\vec{\mathbf{s}}):=\mathbb{E}_{C}P(\mathbf{s}_1|C)P(\mathbf{s}_2|C)P(\mathbf{s}_3|C)P(\mathbf{s}_4|C)$ is the expectation value over the randomized measurements of the Clifford operator $C$. For a proof, see\cite{supp1}.

{\em Extension to mixed states.---} The stabilizer R\'enyi entropy can be extended to mixed states. We define the free resources as the states of the form $\chi=d^{-1}(\bbbone+\sum_{P\in G}\phi_{P}P)$ with $G\subset \mathcal{P}_{n}$ a subset of the Pauli group with $0\le |G|\le d-1$. 
Then, we define the stabilizer $2$-R\'enyi entropy of the mixed state $\rho$ as
\be\label{mixedM}
\tilde{M}_{2}(\rho):=M_{2}(\rho)-S_{2}(\rho)
\ee
with $S_2(\rho)$ being the $2$-R\'enyi entropy of $\rho$ and $M_{2}(\rho):=-\log d\tr(Q\rho^{\otimes 4})$. This quantity is again faithful as it is zero only on the free resources, is invariant under Clifford operations 
 $C\in\mathcal{C}(d)$ then $\tilde{M}_{2}(C\rho C^{\dag})=\tilde{M}_{2}(\rho )$ and has additivity: $\tilde{M}_{2}(\rho\otimes\sigma)=\tilde{M}_{2}(\rho)+\tilde{M}_{2}(\sigma)$. As a corollary, if $\chi$ is a stabilizer state then $\tilde{M}_{2}(\rho\otimes \chi)=\tilde{M}_{2}(\rho)$.  The proof is to be found in\cite{supp1}. Numerical evidence also suggests that $\tilde{M}_{2}$ is nonincreasing under partial trace. The same randomized protocol can also be employed to measure $\tilde{M}_{2}(\rho)$.

{\em Nonstabilizing power}.--- In this section, we want to address the problem of how much nonstabilizerness can be produced by a unitary operator, e.g. a quantum circuit. We therefore restrict our attention to pure states. We define the nonstabilizing power of a unitary operator $U$ as 
\be\label{magicpowerdef}
\mathcal{M}(U):=\frac{1}{|\stab|}\sum_{\ket{\psi}\in \stab}M(U\ket{\psi})
\ee
where $M(\ket{\psi})$ is one of the entropic measures introduced in the previous section, i.e. one of the stabilizer $\alpha-$R\'enyi entropy $M_{\alpha}(\ket{\psi})$ or the stabilizer linear entropy $M_{\text{lin}}(\ket{\psi})$. Also the nonstabilizing power is $(i)$ invariant under free operations, that is, $ \mathcal{M}(U)=\mathcal{M}(C_{1}U)=\mathcal{M}(UC_{2})=\mathcal{M}(C_{1}UC_{2})$, with $C_{1},C_{2}\in\mathcal{C}(d)$ and $(ii)$ is faithful, that is, $ \mathcal{M}(U)=0$ for  the free operations $U\in\mathcal{C}(d)$ and is greater than zero otherwise. A proof of these properties is in\cite{supp1}.

The relationship between the $2$-R\'enyi nonstabilizing power and the linear nonstabilizing power follows easily from the Jensen inequality:
\be\label{lin2R\'enyibound}
\mathcal{M}_2(U)\ge -\log(1-\mathcal{M}_{\text{lin}}(U))
\ee
The linear nonstabilizing power can be computed explicitly by averaging the fourth tensor power of the Clifford group:
$\mathcal{M}_{\text{lin}}(U)=1-4(4+d)^{-1}-d(4+d)^{-1}
D_{+}^{-1}\tr(U^{\otimes 4}QU^{\dag\otimes4}\Pi_{\sym})$, with $\Pi_{\sym}:=\frac{1}{4!}\sum_{\pi\in \mathcal{S}_{4}}T_{\pi}$ the projector onto the completely symmetric subspace of the permutation group $\mathcal{S}_{4}$, $Q=d^{-2}\sum_{P}P^{\otimes 4}$ and $D_{+}\equiv\tr{(Q\Pi_{\sym})}=(d+1)(d+2)/6$. The proof can be found in\cite{supp1}. This result, through Eq.(\ref{lin2R\'enyibound}), also gives a lower bound to the $2$-R\'enyi nonstabilizing power. In the following, we provide some useful results on the linear nonstabilizing power (and, through lower bounds, for the $2$-R\'enyi nonstabilizing power). First of all, we provide a characterization of those unitaries that have zero power: the linear nonstabilizing power $\mathcal{M}_{\text{lin}}(U)= 0$ if and only if $[Q\Pi_{\sym},U^{\otimes 4}]=0$, see\cite{supp1} for the proof. A second interesting result is a characterization of this quantity in terms of the operator $\Delta Q\Pi_\sym := U^{\dag\otimes 4}Q\Pi_\sym U^{\otimes 4}-Q\Pi_\sym$, that is, the difference between the operator $Q\Pi_\sym $ after and before unitary evolution through $U^{\otimes 4}$. We have
$\mathcal{M}_{\text{lin}}(U)=d2^{-1}D_{+}^{-1}(d+4)^{-1}\norm{\Delta Q\Pi_\sym}_{2}^{2}$ which follows straightforwardly from $\norm{\Delta Q\Pi_\sym}_{2}^{2}=2D_{+}-2\tr(U^{\otimes 4}QU^{\dag\otimes 4}Q\Pi_\sym)$. Then again one can apply the bound Eq.(\ref{lin2R\'enyibound}) in this form.

After having characterized the nonstabilizing power of a unitary $U$, we are interested in knowing what is the average value that this quantity attains over the unitary group $\mathcal{U}(d)$. We obtain:
\be
\mathbb{E}_{U}\left[\mathcal{M}_{\text{lin}}(U)\right]=1-4(d+3)^{-1}\label{avulin}
\ee
and consequently the $2$-R\'enyi nonstabilizing power is lower bounded by $\mathbb{E}_{U}\left[\mathcal{M}_{2}(U)\right]\ge \log(d+3)-\log 4\label{avulog}$. 
The proof can be found in\cite{supp1}. This average is also typical. The linear nonstabilizing power indeed shows strong typicality with respect to $U\in \mathcal{U}(d)$:
\ba
\hspace{-0.4cm}\text{Pr}\parent{\left|\mathcal{M}_{\text{lin}}(U)-\averu{\mathcal{M}_{\text{lin}}(U)}\right|\ge \epsilon}\le 4e^{-Cd\epsilon^{2}}
\ea
where $C=O(1)$. In other words, the overwhelming majority of unitaries attains a nearly maximum value of $\mathcal{M}_{\text{lin}}(U)= 1-\Theta(d^{-1})$. 
For a proof, see\cite{supp1}. As a corollary, the average $2$-R\'enyi nonstabilizing power over the full unitary group $\mathcal{U}(d)$ saturates the bound up to an exponentially small error.  Note that, because of the left and right invariance of the Haar measure over groups, the average stabilizer $2$-R\'enyi entropy over all the set of pure states is equal to the average $2$-R\'enyi nonstabilizing power over the unitary group, namely $\mathbb{E}_{\ket{\psi}}\left[M_{2}(\ket{\psi})\right]=\averu{\mathcal{M}_{2}(U)}$. To conclude this section, let us show how the nonstabilizing power lower bounds the T count $t(U)$, i.e. the minimum number of $T$ gates needed in addition of Clifford resources to obtain a given unitary operator\cite{jiang2021lower}:
\ba\label{ttt}
t(U)\ge -\log_2\left(d-(4+d)\mathcal{M}_{\text{lin}}\right)+\log_2(d+3)-2.\hspace{0.8cm}
\ea
The proof can be found in\cite{supp1}. According to the typicality result, for a generic $U\in\mathcal{U}(d)$, with overwhelming probability, one obtains $t(U)\gtrsim \Theta(n)$.

\ignore{
{\em Examples.---} 
Let us now provide a few applications of nonstabilizing power when $U$ is equal to a $k-$fold tensor power of single $T$-gates and for $U$ being a single Toffoli gate. In the following, the result does not depend on which qubits these gates are applied to. From direct evaluation of Eq.\eqref{boundlinmag}, we find:

{\em Example 1 - $T$-gate}. The linear nonstabilizing power for a number $k$ of $T$-gates is equal to:
\be 
\mathcal{M}_{\text{lin}}(T^{\otimes k}\otimes \bbbone^{\otimes(n-k)})
=1-24\times 2^{-5k}+\Omega(d^{-1})
\ee

{\em Example 2 - Toffoli gate}. The linear nonstabilizing power for a single nonstabilizerness gate reads:
\be 
\mathcal{M}_{\text{lin}}(CCNOT\otimes \bbbone^{\otimes(n-3)})=21/32+\Omega(d^{-1})
\ee 
Notice that the linear nonstabilizing power of three $T-$gates is greater than a single Toffoli gate, both being three-qubit gates.
}

{\em Nonstabilizerness and chaos}.--- Having defined a measure of nonstabilizing power, we now use it to investigate some important questions in many-body quantum physics and quantum chaos theory. In\cite{leone2021quantum}, it was shown that, in order to obtain the typical behavior of the eight-point out-of-time-order correlation functions ($8$-OTOC) for universal unitaries, a number of $T$ gates of order $\Theta(N)$ was both necessary and sufficient. The universal behavior of $8$-OTOC is a mark of the onset of quantum chaos\cite{leone2021quantum}.  Since the $T$ gates are non-Clifford resources, this raises the more general question of what is the amount of nonstabilizerness necessary to drive a quantum system toward quantum chaos. In\cite{leone2021quantum}, the setting is that of a Clifford circuit doped by $k$ layers of non-Clifford one qubit gates, e.g., the $\theta$-phase gates, what we call $k$-doped random quantum Clifford circuit\cite{zhou2020single,gross2020quantum,leone2021quantum,oliviero2021transitions}. We start addressing the question of what is the nonstabilizing power associated to such circuits. We can show the following.

{\bf Proposition:} The nonstabilizing power is monotone under a $k$-doped random quantum circuit and it is given by
\be
\mathbb{E}_{\mathcal{C}_k}\!\left[\mathcal{M}_{\text{lin}}(U)\right]\!=\!1\!-(3+d)^{-1}\! \left(4+\!\!(d-1)f(\theta)^{k}\right)
\ee
with $f(\theta)= \left(\frac{7d^2-3d +d(d + 3)\cos(4\theta)-8}{8(d^2-1)}\right)\le 1$. The proof can be found in\cite{supp1}. Note that iff $k=\Theta(n)$ then $\mathbb{E}_{\mathcal{C}_k}[\mathcal{M}_{\text{lin}}(U)]=\averu{\mathcal{M}_{\text{lin}}(U)}$, unless, of course, $\theta=\pi/2$ in which case the phase gate is in the Clifford group and $f=1$. This proposition shows how nonstabilizerness increases with non-Clifford doping. We notice that nonstabilizerness will converge exponentially fast to the universal maximal value with the number $k$ of non-Clifford gates used. This is the same type of behavior shown by the $8$-OTOCs\cite{leone2021quantum}. 

At this point, we are ready to show a direct connection between the stabilizer R\'enyi entropy and the OTOCs. We have the following:

{\bf Theorem:} The linear nonstabilizing power is equal to the fourth power of the $2$-OTOC of the Pauli operators $P_1,P_2$ averaged over both all the initial states with the Haar measure and over the Pauli group, that is,
\ba
\mathcal{M}_{\text{lin}}(U)&=&1-4(4+d)^{-1}-d^2(d+3)4^{-1}(d+4)^{-1}\times\nonumber\\
&&\mathbb{E}_{\ket{\psi}}\left[\langle\otoc_{2}(\tilde{P}_1,P_{2},\psi)^{4}\rangle_{P_1,P_2}\right]\label{otocpsiM}
\ea
where $\aver{\cdot}_{P_1,P_2}$ is the average over the Pauli group $\mathcal{P}_n$, $\mathbb{E}_{\ket{\psi}}\left[\cdot\right]$ is the Haar average over set of pure states and
$\otoc_{2}(\tilde{P}_1,P_{2},\psi):=\langle\psi|\tilde{P}_1P_2|\psi\rangle$, 
where $\tilde{P}_1\equiv U^{\dag}P_1U$.
The proof can be found in\cite{supp1}. As a corollary, we can bound the $2$-R\'enyi nonstabilizing power through the linear nonstabilizing power. 

As we can see, the average fourth power of the $2$-OTOC is related to the same moment of the Haar distribution of the following averaged  eight-point out-of-time-order correlation function: $\aver{\otoc_{8}}:=\langle d^{-1}\tr(\tilde{P}_1P_{2}P_3P_4\tilde{P}_{1}P_2P_4P_5\tilde{P}_1P_{2}P_5P_6\tilde{P}_1P_{2}P_6P_3)\rangle$, where the average $\aver{\cdot}$ is taken over all the Pauli operators $P_{i}$ for $i=1,\ldots, 6$. One can therefore show that the linear nonstabilizing power is related to the $8$-OTOC as follows

{\bf Theorem:} The linear nonstabilizing power can be expressed as an eight-point OTOC up to an exponentially small error in $d$:
\ba
&&\mathcal{M}_{\text{lin}}(U)\simeq 1-\frac{1}{(d+4)}\left(4+\frac{d^2(d+3)}{4}\aver{\otoc_8}\right)\nonumber
\ea
The proof can be found in\cite{supp1} and it relies on the fact that the $2$-OTOCs have strong typicality with respect to $\ket{\psi}$. We can comment on this last result: in order for the $8$-OTOCs to attain the Haar value, $\sim d^{-4}$ associated to quantum chaotic behavior (cfr.\cite{supp1}), then the nonstabilizing power of $U$ needs to be $M_{\text{lin}} (U) \simeq 1-4/d$ for large dimension $d$. So only unitaries with maximal non stabilizing power (up to an exponentially small error ) can be chaotic.

{\em Conclusions}.--- Harnessing the power of quantum physics to obtain an advantage over classical information processing is at the heart of the efforts to build a quantum computer and finding quantum algorithms. Quantumness beyond classical simulability is quantified in terms of how many non-Clifford resources are necessary (nonstabilizerness), and this notion has been colloquially dubbed ``magic''. This information-theoretic notion is also involved  - beyond quantum computation - in physical processes like thermalization, quantum thermodynamics, black holes dynamics, and the onset of quantum chaotic behavior\cite{Sarkar2020resource,White2021cft,liu2020many,leone2021quantum}. 
In this Letter, we have shown a new measure of nonstabilizerness 
in terms of the R\'enyi entropies of a probability distribution associated to the (squared) expectation values of Pauli strings and show that this is a good measure from the point of view of resource theory. This quantity can be measured experimentally through a randomized measurement protocol. Thanks to this new measure, we can define the notion of nonstabilizing power of a unitary evolution and show that the onset of quantum chaos requires a maximal amount of the stabilizer R\'enyi entropy. 


{\em Acknowledgments.}--- We acknowledge support from NSF award no. 2014000. 
%

\vspace{-2cm}
\end{document}


\setcounter{secnumdepth}{3}
\renewcommand{\theequation}{S.\arabic{equation}}
\title{Supplemental Material: Stabilizer R\'enyi Entropy}
\author{Lorenzo Leone}\email{Lorenzo.Leone001@umb.edu}
\affiliation{Physics Department,  University of Massachusetts Boston,  02125, USA}
\author{Salvatore F.E. Oliviero}
\affiliation{Physics Department,  University of Massachusetts Boston,  02125, USA}
\author{Alioscia Hamma}
\affiliation{Physics Department,  University of Massachusetts Boston,  02125, USA}
\affiliation{Université Grenoble Alpes, CNRS, LPMMC, 38000 Grenoble, France}
\maketitle
\onecolumngrid
\section{Stabilizer $\alpha$-R\'enyi entropy}
\subsection{Properties of the stabilizer R\'enyi entropy}\label{proofpropertiesalphamagic} 
In this section, we give a proof of the properties $(i)$, $(ii)$ and $(iii)$ for the stabilizer $\alpha-$R\'enyi entropy. Let us rewrite $M_{\alpha}(\ket{\psi})$ as
\be
M_{\alpha}(\ket{\psi})=(1-\alpha)^{-1}\log d^{-\alpha} \sum_{P\in\mathcal{P}_n}\aver{\psi|P|\psi}^{2\alpha}-\log d
\ee
\begin{enumerate}[label=(\roman*)]

\item If $\ket{\psi}$ is a stabilizer state $\aver{\psi|P|\psi}=\pm 1$ only for $d$ mutually commuting Pauli operators  and thus $M_{\alpha}(\ket{\psi})=\frac{1}{1-\alpha}\log d^{-\alpha+1}-\log d=0$. Conversely, let $M_{\alpha}(\ket{\psi})=0$ with $\alpha\neq 0,1$, this means that $\norm{\Xi(\ket{\psi})}_{\alpha}=d^{(1-\alpha)/\alpha}$. Recalling that the purity of the state must be one, we write:
\be
\begin{cases}
\sum_{P}\tr^{2\alpha}(P\st{\psi})=d\\
\sum_{P}\tr^{2}(P\st{\psi})=d
\end{cases}
\ee
the only possibility for the two equations to be satisfied at the same time is that $\st{\psi}$ has the following form:
\be
\st{\psi}=\frac{1}{d}\sum_{P}\phi_PP
\label{stabstate}
\ee
where $\phi_{P}$ are $\pm 1$, this is a stabilizer state\cite{zhu2016clifford}. For $\alpha=0$ we have that $\card(\Xi(\ket{\psi}))=d$ and this is the case only for states written in the form of Eq. \eqref{stabstate}. Lastly, the proof for $\alpha=1$  follows by continuity. 
\item  Let $C\in\mathcal{C}_n$ the Clifford group on $n-$qubits. The action of a Clifford operator on a Pauli operator $P\in\mathcal{P}_n$ reads $C^{\dag}PC=\phi Q$ where $Q\in\mathcal{P}_n$ $\phi=\pm 1$, thus trivially the stabilizer $\alpha-$R\'enyi is conserved for free operations. 
\item Consider $\ket{\psi}\otimes \ket{\phi}$ where $\ket{\psi}\in\mathcal{S}(\mathcal{H}_1)$ and $\ket{\phi}\in\mathcal{S}(\mathcal{H}_2)$ with $d_{1}=\dim\mathcal{H}_1$ and $d_{2}=\dim\mathcal{H}_2$, then:
\be
M_{\alpha}(\ket{\psi}\otimes \ket{\phi})=\frac{1}{1-\alpha}\log d_{1}^{-\alpha}d_{2}^{-\alpha} \sum_{P\in\mathcal{P}_n}\aver{\psi\otimes \phi|P|\psi\otimes \phi}^{2\alpha} -\log d_1-\log d_2
\ee
For every $P\in\mathcal{P}_n$ we can write it as $P=P_{1}\otimes P_{2}$, thus $\sum_{P\in\mathcal{P}_n}\aver{\psi\otimes \phi|P|\psi\otimes \phi}^{2\alpha}=\sum_{P_{1}\in\mathcal{P}_{n}^{1},P_{2}\in\mathcal{P}_{n}^{2}}\aver{\psi|P_1|\psi}^{2\alpha}\aver{\phi|P_2|\phi}^{2\alpha}$ and thus:
\ba
M_{\alpha}(\ket{\psi}\otimes \ket{\phi})&=&\frac{1}{1-\alpha}\log d_{1}^{-\alpha} \sum_{P_1\in\mathcal{P}_{n}^{1}}\aver{\psi|P_1|\psi}^{2\alpha}+\frac{1}{1-\alpha}\log d_{2}^{-\alpha} \sum_{P_2\in\mathcal{P}_{n}^{2}}\aver{\phi|P_2|\phi}^{2\alpha}\nonumber\\ &-&\log d_1-\log d_2=M_{\alpha}(\ket{\psi})+M_{\alpha}( \ket{\phi})
\ea
this concludes the proof.
\end{enumerate}

\subsection{Bounding the stabilizer R\'enyi entropy}\label{proofboundrenyientropy}
Int this section, we provide a proof for the bound to the stabilizer R\'enyi entropy given in Eq.(5). We first prove that:
\be
|St(\ket{\psi})|\le \frac{d^2}{\card(\ket{\psi})}
\ee
where we recall that $St(\ket{\psi}):=\{P\in\mathcal{P}_n\,|\,P\ket{\psi}=\pm\ket{\psi}\}$. Consider a pure state $\ket{\psi}$ and its expansion in the Pauli basis, i.e. $\st{\psi}=d^{-1}\sum_{P}\aver{\psi|P|\psi}P$. The number of elements of this sum is, by definition, equal to $\card(\ket{\psi})$. Let $|St(\ket{\psi})|=S$. First, it's easy to see that $S$ must be equal to $2^s$ for some $s$, because if $P_1,\dots, P_s$ stabilize $\ket{\psi}$ also the group generated by $P_1,\dots,P_2$ stabilizes $\ket{\psi}$ and $|\aver{P_1,\dots, P_s}|=2^s$. Then, in order to $P_1,\dots,P_s$ to stabilize $\ket{\psi}$ they must commute among each other and commute with all the $\card(\ket\psi)$ Pauli operators in the Pauli decomposition of $\ket{\psi}$. Now, since $s$ mutually commuting Pauli operators commute at most with $d^2/2^s$ Pauli operators\cite{sarkar2019sets}, $\card(\ket{\psi})$ cannot exceed $d^2/2^s$ otherwise $\aver{P_1,\dots, P_s}$ can never stabilize $\ket{\psi}$. We conclude that $\card(\ket{\psi})\le d^2/2^s\equiv d^2/|St(\ket{\psi})|$. In order to conclude the proof is sufficient to apply the log to this inequality and get 
\be
M_0(\ket{\psi})\equiv \log\frac{\card(\ket{\psi})}{d}\le \nu(\ket\psi),
\ee
then from the hierarchy of R\'enyi entropies one easily gets the upper bound of Eq.(5).

\subsection{An example: stabilizer $\alpha-$Renyi entropy for the magic state}\label{magicstate}
The magic state is a single qubit quantum state defined as $\ket{H}:=\frac{1}{\sqrt{2}}(\ket{0}+e^{i\pi/4}\ket{1})$. We can write the density matrix $\st{H}$ in the Pauli decomposition as:
\be
\st{H}=\frac{1}{2}\bbbone+\frac{1}{2\sqrt{2}}(X-Y)
\ee
In order to compute the stabilizer $\alpha-$Renyi entropy of the $n-$fold tensor power, we need to find the vector $\Xi(\ket{H}^{\otimes n})$ containing the probability distribution in the Pauli basis. Thus we need the following expansion:
\be
\st{H}^{\otimes n}=\frac{1}{2^n}\left(\bbbone+\frac{1}{\sqrt{2}}(X-Y)\right)^{\otimes n}
\ee
It can be easily checked that among the $4^{n}$ possible Pauli operators we have only $3^n$ non zero entries for $\Xi(\ket{H}^{\otimes n})$ and $\Xi_{P}(\ket{H}^{\otimes n})=\frac{1}{2^{n+k}}$ for $2^{k}\binom{n}{k}$ Pauli operators for $k=0,\dots,n$. Indeed, note that:
\ba
\sum_{k=0}^{n}\binom{n}{k}2^{k}&=&3^n\\
\sum_{k=0}^{n}\binom{n}{k}2^{k}\frac{1}{2^{k+n}}&=&1\\
\ea
i.e. the non zero entries are $3^n$ and it is a normalized probability distribution. The stabilizer $\alpha-$Renyi entropy:
\be
M_{\alpha}(\ket{H}^{\otimes n})=\frac{1}{1-\alpha}\log \sum_{k=0}^{n}\binom{n}{k}2^{k}2^{-\alpha(n+k)}-n=\frac{1}{1-\alpha}(n\log(2^{1-\alpha}+1)- n)
\label{magicstaterenyientropy}
\ee


\section{Randomized measurements and $2-$Stabilizer R\'enyi entropy: proof of Eq.$(6)$}
In this section we are going to provide the proof of Eq. $(6)$, i.e. we prove that the stabilizer $2-$R\'enyi entropy can be measured via a randomized measurements protocol\cite{Brydges2019randomized}. Let $\ket{\psi}$ be the quantum \textit{pure} state we want to measure the value of the stabilizer $2-$R\'enyi entropy of. First, randomly choose a unitary operator $C\in\mathcal{C}(2^n)$ and operate it on the state $C\ket{\psi}$, then measure $C\ket{\psi}$ in the computational basis $\{\ket{\mathbf{s}}\}\equiv \{s=0,1\}^{\otimes n}$. We denote by $\mathbf{s}=(s^1,\ldots,s^n)$ the n-bit string corresponding to the computational basis state. For a given $C$, by repeated measurements one can estimate the probability $P(s|C):=|\langle s|C|\psi\rangle|^2$. First, define:
\be
\mathcal{Q}(\mathbf{s}_{1},\mathbf{s}_2,\mathbf{s}_3,\mathbf{s}_4):=\mathbb{E}_{C}P(\mathbf{s}_1|C)P(\mathbf{s}_2|C)P(\mathbf{s}_3|C)P(\mathbf{s}_4|C)
\ee
i.e. the average probability over $C$ of all the computational basis vectors $\ket{\mathbf{s}_1},\ket{\mathbf{s}_2},\ket{\mathbf{s}_3}$ and $\ket{\mathbf{s}_4}$.  Define now $\ket{\vec{\mathbf{s}}}:=\ket{\mathbf{s}_1\mathbf{s}_2\mathbf{s}_3\mathbf{s}_4}$. One can express the above as:
\be
\mathcal{Q}(\vec{\mathbf{s}})=\mathbb{E}_{C}\tr(\ket{\vec{\mathbf{s}}}\bra{\vec{\mathbf{s}}}C^{\otimes 4}\st{\psi}^{\otimes 4}C^{\dag\otimes 4})
\label{definitionQ}
\ee

Now, define the diagonal operator $\hat{O}:=\sum_{\vec{\mathbf{s}}}O(\vec{\mathbf{s}})\ket{\vec{\mathbf{s}}}\bra{\vec{\mathbf{s}}}$.
We then make the ansatz:
\ba
\mathbb{E}_{C} \tr \left[ \hat{O} C^{\otimes 4}\st{\psi}^{\otimes 4}C^{\dag\otimes 4}\right]
=\tr(Q\st{\psi}^{\otimes 4})
\ea
that is,
\be
\sum_{\vec{\mathbf{s}}}O(\vec{\mathbf{s}})\mathcal{Q}(\vec{\mathbf{s}})=\tr(Q\st{\psi}^{\otimes 4})
\label{ansatz}
\ee
By permuting in the trace we have
\be
\mathbb{E}_{C}\tr(C^{\otimes 4}\hat{O}C^{\otimes 4}\st{\psi}^{\otimes 4})=\tr(Q\st{\psi}^{\otimes 4})
\ee
for the ansatz in Eq. \eqref{ansatz} to be satisfied, we need to require that:
\be
\mathbb{E}_{C}[C^{\otimes 4}\hat{O}C^{\dag\otimes 4}]=Q
\ee
By the Weingarten formula for the Clifford group, we have \cite{leone2021quantum}:
\be
\mathbb{E}_{C}[C^{\otimes 4}\hat{O}C^{\dag\otimes 4}]=\sum_{\pi\sigma}W^{+}_{\pi\sigma}\tr(\hat{O}QT_{\pi})QT_{\sigma}+W^{-}_{\pi\sigma}\tr(\hat{O}Q^{\perp}T_{\pi})Q^{\perp}T_{\sigma}
\ee
Since $\mathbb{E}_{C}[C^{\otimes 4}QC^{\dag\otimes 4}]=Q$ and since the twirling is a map \textit{many to one}, a direct application of the above equation is to require the following equations to be satisfied:
\be
\begin{cases}
\tr(\hat{O}QT_{\pi})=\tr(QT_{\pi}), \quad \forall \pi\in \mathcal{S}_{4}\\
\tr(\hat{O}T_{\sigma})=\tr(QT_{\sigma}), \quad \forall \sigma\in \mathcal{S}_{4}
\end{cases}
\ee
In order to solve the above, we note that we can write $\tr(\hat{O}QT_{\pi})=\tr(\hat{o}Q_{2}T_{\pi}^{(2)})^{n}$, where $Q_{2}=\bbbone^{\otimes 4}+X^{\otimes 4}+Y^{\otimes 4}+Z^{\otimes 4}$, $T_{\sigma}^{(2)}=\sum_{s_{1},s_{2},s_{3},s_{4}}\ket{\pi(s_{1}s_{2}s_{3}s_{4})}\bra{s_{1}s_{2}s_{3}s_{4}}$ and $s_{i}=\{0,1\}$ for $i=1,2,3,4$, i.e. $T_{\pi}^{(2)}$ is the permutation operator corresponding to the permutation $\pi\in \mathcal{S}_{4}$ defined on four copies of $\mathbb{C}^{2}$, and finally $\hat{o}^{\otimes n}=S\hat{O}S^{-1}$, where $S\ket{s_{1}\dots s_{n}}^{\otimes 4}=\ket{s_1}^{\otimes 4}\otimes \dots\otimes \ket{s_n}^{\otimes 4}$ and $\hat{o}$ is a diagonal operator defined on four copies of $\mathbb{C}^{2}$:
\be
\hat{o}:=\sum_{\vec{s}}o(\vec{s})\ket{\vec{s}}\bra{\vec{s}}
\ee
where $\vec{s}:=s_{1}s_{2}s_{3}s_{4}$. The above operations are legitimate because $(\mathbb{C}^{2\otimes n})^{\otimes 4}\simeq (\mathbb{C}^{2\otimes 4})^{\otimes n}$. Thus, we can solve the following equations:
\be
\begin{cases}
\tr(\hat{o}Q_{2}T_{\pi}^{(2)})=\tr(Q_{2}T_{\pi}^{(2)}), \quad \forall \pi\in\mathcal{S}_{4}\\
\tr(\hat{o}T_{\sigma}^{(2)})=\tr(Q_{2}T_{\sigma}^{(2)}), \quad \forall \sigma\in\mathcal{S}_{4}
\end{cases}
\ee
Solving the above relations we find that (one of the solution):
\be
o(\vec{s})\equiv o(s_{1},s_{2},s_{3},s_{4})=(-2)^{-s_{1}\oplus s_{2}\oplus s_{3}\oplus s_{4}}
\ee
where $\oplus$ is the binary sum between bits of $0$s and $1$s. By applying the operator $S$ defined above we find:
\be
O(\vec{\mathbf{s}})\equiv O(\mathbf{s}_{1},\mathbf{s}_{2},\mathbf{s}_{3},\mathbf{s}_{4})=(-2)^{-\mathbf{s}_1\oplus\mathbf{s}_2\oplus\mathbf{s}_3\oplus\mathbf{s}_4}
\label{weigths}
\ee
where we defined $\mathbf{s}_1\oplus\mathbf{s}_2\oplus\mathbf{s}_3\oplus\mathbf{s}_4:=\sum_{i=1}^{n}s_{1}^{i}\oplus s_{2}^{i}\oplus s_{3}^{i}\oplus s_{4}^{i}$. Thus, by combining the average measurements $\mathcal{Q}(\mathbf{s}_1,\mathbf{s}_2,\mathbf{s}_3,\mathbf{s}_4)$ in Eq. \eqref{definitionQ} with weights given by Eq. \eqref{weigths} we obtain the stabilizer $2-$R\'enyi entropy:
\be
M_{2}(\ket{\psi})=-\log \sum_{\vec{\mathbf{s}}}(-2)^{-\norm{\vec{\mathbf{s}}}}\mathcal{Q}(\vec{\mathbf{s}})-\log d
\ee
where $\norm{\vec{\mathbf{s}}}:=\mathbf{s}_1\oplus\mathbf{s}_2\oplus\mathbf{s}_3\oplus\mathbf{s}_4$.

\section{Stabilizer $2-$R\'enyi entropy for mixed state}
The aim of this section is to extend the definition of stabilizer $2-$R\'enyi entropy to mixed states. First of all, let us define the set of free state for our non-stabilizerness measure. We define $S$ to be a stabilizer state iff $\chi$ is positive semidefinite and its decomposition in the Pauli basis reads:
\be
\chi=\frac{\bbbone}{d}+\frac{1}{d}\sum_{P\in G}\phi_{P}P
\label{stabilizermixed}
\ee
where $G$ is a subset of the $n-$qubit Pauli group with cardinality $0<|G|<d-1$ and $\phi_{P}\in\{-1,+1\}$. Note that $\chi$, defined in Eq. \eqref{stabilizermixed}, is hermitian and has trace one. Moreover, we have that $\chi$ is a pure stabilizer state iff $|G|=d-1$\cite{zhu2016clifford}. In order to extend the stabilizer $2-$R\'enyi entropy to mixed states, recall the peculiar property of the Pauli decomposition for pure states: for a ket $\ket{\psi}$, one has $\sum_{P}\Xi_{P}(\ket{\psi})=1$, where $\Xi_{P}=d^{-1}\tr(P\st{\psi})^{2}$ and thus $\{\Xi_{P}(\ket{\psi}\}$ is a probability distribution. This property fails for a generic mixed state $\rho$, indeed:
\be
d^{-1}\sum_{P}\tr(P\rho)^{2}=\tr(\rho^2)\equiv\pur(\rho)
\ee
and thus $\{\Xi_{P}(\rho)\}$ is not a normalized, hence is not a probability distribution. Thus, we lay down the extended definition of stabilizer $2-$R\'enyi entropy for mixed states $\rho$ as:
\be
\tilde{M}_{2}(\rho):=M_{2}(\rho)-S_{2}(\rho)
\ee
where $S_{2}(\rho):=-\log\pur(\rho)$ is just the $2-$R\'enyi entanglement entropy and $M_{2}(\rho)\equiv -\log d\tr(Q\rho^{\otimes 4})$. First, we can write the above in the following nice form:
\be
\tilde{M}_{2}(\rho)\equiv-\log\frac{\sum_{P}\tr^{4}(P\rho)}{\sum_{P}\tr^{2}(P\rho)}
\ee
where the sum is over the whole Pauli group of $n-$qubit $\mathcal{P}_{n}$. It obeys to the following properties:
\begin{theorem*}
The stabilizer $2-$R\'enyi entropy $\tilde{M}_{2}(\rho)$, defined on mixed states $\rho$, obeys to the following properties:
\begin{enumerate}[label=(\roman*)]
    \item faithfulness: $\tilde{M}_{2}(\rho)=0$ iff $\rho$ is a stabilizer state of the form of Eq. \eqref{stabilizermixed}.
    \item invariance under Clifford operations: $C\in\mathcal{C}(d)$ then $\tilde{M}_{2}(C\rho C^{\dag})=\tilde{M}_{2}(\rho )$.
    \item additivity: $\tilde{M}_{2}(\rho\otimes\sigma)=\tilde{M}_{2}(\rho)+\tilde{M}_{2}(\sigma)$. As a corollary, if $S$ is a stabilizer state then $\tilde{M}_{2}(\rho\otimes S)=\tilde{M}_{2}(\rho)$.
\end{enumerate}
\end{theorem*}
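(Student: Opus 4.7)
The plan is to work throughout with the compact expression $\tilde{M}_{2}(\rho)=-\log\frac{\sum_{P}\tr^{4}(P\rho)}{\sum_{P}\tr^{2}(P\rho)}$ and to derive the three properties by direct manipulation of the Pauli-basis sums, mirroring the pure-state arguments of Section~\ref{proofpropertiesalphamagic}. Faithfulness is the only item that requires a genuinely new ingredient beyond what was used in the pure case; the other two are essentially structural rewritings.

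For (i), I would set $x_{P}:=\tr^{2}(P\rho)$ and first observe that $x_{P}\in[0,1]$ because $|\tr(P\rho)|\le\|P\|_{\infty}\|\rho\|_{1}=1$. The condition $\tilde{M}_{2}(\rho)=0$ is then equivalent to $\sum_{P}x_{P}^{2}=\sum_{P}x_{P}$, and since $x\in[0,1]$ forces $x^{2}\le x$ with equality only at the endpoints, the identity can hold only if every $x_{P}$ is either $0$ or $1$, i.e.\ $\tr(P\rho)\in\{-1,0,+1\}$ for every Pauli $P$. Substituting into the Pauli expansion $\rho=d^{-1}\sum_{P}\tr(P\rho)P$ yields precisely the form of Eq.~\eqref{stabilizermixed} with signs $\phi_{P}\in\{\pm 1\}$ on $G=\{P\neq\bbbone\,:\,\tr(P\rho)\neq 0\}$. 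Conversely, a state of that form has every $x_{P}$ in $\{0,1\}$, so numerator and denominator are both equal to $1+|G|$, giving $\tilde{M}_{2}(\rho)=0$.

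Properties (ii) and (iii) are the mixed-state transcriptions of their pure-state analogues. For invariance, the Clifford action $C^{\dagger}PC=\phi_{P}Q_{P}$ with $\phi_{P}=\pm 1$ and $P\mapsto Q_{P}$ a bijection of $\mathcal{P}_{n}$ gives $\tr^{k}(PC\rho C^{\dagger})=\tr^{k}(Q_{P}\rho)$ for the even exponents $k=2,4$, so both sums are left invariant and so is their ratio. For additivity, the factorization $\tr((P_{1}\otimes P_{2})(\rho\otimes\sigma))=\tr(P_{1}\rho)\tr(P_{2}\sigma)$ lets each sum split as $\sum_{P_{1}}\tr^{k}(P_{1}\rho)\cdot\sum_{P_{2}}\tr^{k}(P_{2}\sigma)$ for $k=2,4$; the ratio factorizes accordingly and the logarithm splits into $\tilde{M}_{2}(\rho)+\tilde{M}_{2}(\sigma)$. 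The stabilizer-tensoring corollary then follows immediately from (i).

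The main (and rather mild) obstacle is really only in (i): one must notice the uniform bound $x_{P}\le 1$ in order to invoke the elementary inequality $x^{2}\le x$, and one should observe that the resulting $\{-1,0,+1\}$-valued coefficients are automatically compatible with a positive operator because we started from a bona fide density matrix, so no additional positivity verification is needed.
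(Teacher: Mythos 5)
Your proof is correct, and for items (ii) and (iii) it is essentially the paper's argument in a slightly different dress: the paper proves invariance and additivity by splitting $\tilde{M}_{2}=M_{2}-S_{2}$ and inheriting both properties from the pure-state section, whereas you work directly on the ratio $\sum_{P}\tr^{4}(P\rho)/\sum_{P}\tr^{2}(P\rho)$ and let the Pauli permutation (for Cliffords) and the tensor factorization (for products) act on numerator and denominator simultaneously; the two routes are interchangeable. The genuine difference is in the converse direction of faithfulness. The paper's proof only states that the equality $\sum_{P}\tr^{2}(P\rho)=\sum_{P}\tr^{4}(P\rho)$ ``holds if'' $\rho$ has the form of Eq.~\eqref{stabilizermixed}, which is the forward implication restated rather than a derivation of the converse. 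Your argument --- setting $x_{P}=\tr^{2}(P\rho)\in[0,1]$ (via $|\tr(P\rho)|\le\norm{P}_{\infty}\norm{\rho}_{1}=1$) and using that $x^{2}\le x$ on $[0,1]$ with equality only at $x\in\{0,1\}$ forces every $\tr(P\rho)\in\{-1,0,+1\}$ --- supplies exactly the missing step, and the observation that positivity is inherited from $\rho$ being a density matrix closes the loop. The only point you share with the paper in leaving implicit is the cardinality constraint on $G$: the purity bound $d^{-1}\le\pur(\rho)=d^{-1}(|G|+1)\le 1$ gives $0\le|G|\le d-1$, matching the (non-strict) range the paper itself ends up with, and the degenerate cases $G=\emptyset$ (maximally mixed state) and $|G|=d-1$ (pure stabilizer state) are harmless. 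In short, your proposal is not merely correct but repairs a gap in the published converse.
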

\begin{proof}

\begin{enumerate}[label=(\roman*)]
    \item Start with $\chi$ being a stabilizer state of the form in Eq. \eqref{stabilizermixed}. For all $P\in G$ we have $\tr^{2}(P\chi)=1$ and for any $P\not\in G$ we have $\tr(P\chi)=0$, so:
\be
\sum_{P}\tr^{2}(P\chi)=\sum_{P}\tr^{4}(P\chi)=|G|+1
\ee
thus their ratio is one and consequently $\tilde{M}_{2}(\chi)=0$. Conversely, if $\tilde{M}_{2}(\rho)=0$ for $\rho$ being a positive semi-definite, hermitian and unit trace operator, we have the following relation to be satisfied:
\be
\sum_{P}\tr^{2}(P\rho)=\sum_{P}\tr^{4}(P\rho)=|G|+1
\ee
the above holds if $\rho=\frac{\bbbone}{d}+\sum_{P\in G}\phi_{P}P$. We just need to prove that $0<|G|<d-1$: as pointed out above, $\pur(\rho)=d^{-1}\sum_{P}\tr^{2}(P\rho)$ and since the purity of $\rho$ is bounded $d^{-1}\le\pur(\rho)\le1$ we have $0\le|G|\le d-1$. This concludes the proof.
\item The invariance under Clifford unitary operations follows from the unitary invariance of the $2-$R\'enyi entropy and from the fact that $M_{2}(\rho)$ is invariant under Clifford operations, see Sec. \ref{proofpropertiesalphamagic}.
\item Similarly, the additivity follows from the fact that $M_{2}(\rho\otimes \sigma)=M_{2}(\rho)+M_{2}(\sigma)$ and $S_{2}(\rho\otimes \sigma)=S_{2}(\rho)+S_{2}(\sigma)$, see Sec. \ref{proofpropertiesalphamagic}.
\end{enumerate}
 
\end{proof}


\section{Non-stabilizing power}
\subsection{Properties of the non-stabilizing power}\label{proofmagpowprop}
In this section, we give a proof of the properties $(i)$ and $(ii)$ for the non-stabilizing power presented in the main text.
\begin{enumerate}[label=(\roman*)]
    \item Let $C_{1},C_{2}$ be two Clifford operators. Then, thanks to the left/right invariance of the Haar measure over the Clifford group, we have $\mathcal{M}(UC_{2})=\mathcal{M}(U)$ and $\mathcal{M}(C_{1}UC_{2})=\mathcal{M}(C_{1}U)$. Then:
\be
\mathcal{M}(C_1 U)=\frac{1}{|\stab|}\sum_{\ket{\psi}\in \stab}M(C_{1}U\ket{\psi})=\frac{1}{|\stab|}\sum_{\ket{\psi}\in \stab}M(U\ket{\psi})=\mathcal{M}(U)
\ee
where we used the stability under Clifford operations of $M(\ket{\psi})$. 
\item For $U\in\mathcal{C}(d)$ the proof is trivial thanks to the left-right invariance of the Haar measure over the Clifford group. Conversely, if $C\not\in \mathcal{C}(d)$ there exists $\ket{\psi}\in\stab$ such that $U\ket{\psi}\not\in \stab$ and thus, thanks to the faithfulness of the non-stabilizerness measure $M(\ket{\psi})$ over pure states, one has $\mathcal{M}(U)>0$.
\subsection{Exact computation of the linear non-stabilizing power}\label{proofboundlrmag}
To compute the linear non-stabilizing power, we first calculate the average over the stabilizer states. Note that
\be 
\frac{1}{\stab}\sum_{\ket{\psi}\in\stab}\tr(QU^{\otimes 4}\st{\psi}^{\otimes 4}U^{\dag\otimes 4})=\mathbb{E}_{C}[\tr(QU^{\otimes 4} C^{\dag\otimes 4}\st{0}^{\otimes 4}C^{\otimes 4}U^{\otimes 4})]
\ee 
i.e. the average over stabilizer states can be seen as the average over the Clifford group $\mathbb{E}_{C}$ on a reference stabilizer state, e.g. $\st{0}^{\otimes4}$. The integration of $\st{0}^{\otimes 4}$ over the Clifford group\cite{zhu2016clifford,leone2021quantum} returns us:
\be
\mathbb{E}_{C}[\tr(QU^{\otimes 4} C^{\dag\otimes 4}\st{0}^{\otimes 4}C^{\otimes 4}U^{\otimes 4})]=\frac{1}{d(4+d)}\left(4+\frac{d}{\tr(Q\Pi_{\sym})}\tr(QU^{\otimes 4}Q\Pi_{\sym}U^{\dag\otimes 4}) \right)\label{avstate}
\ee
We can now look at the average of $M_{lin}(U)$, it reads:
\ba
\mathcal{M}_{lin}(U)&=&\frac{1}{|\stab|}\sum_{\psi\in\stab} M_{lin}(U\ket{\psi})=\frac{1}{|\stab|}\sum_{\psi\in\stab}\left(1- d\norm{\Xi(U\ket{\psi})}^2_{2}\right)\\
&=&1-\frac{d}{|\stab|}\sum_{\psi\in\stab}\tr(Q U^{\otimes 4}C^{\otimes 4}\st{0}^{\otimes 4}C^{\dag\otimes 4}U^{\dag\otimes 4} Q)\label{eq9}
=1-\frac{1}{4+d}\left(4+\frac{d}{\tr(Q\Pi_{\sym})}\tr(U^{\dag\otimes 4} QU^{\otimes 4}Q\Pi_{\sym})\right)\nonumber
\ea
where we used that $\stab^{-1}\sum_{\psi \in\stab}=1$, and for the last equivalence we used the results of Eq.\eqref{avstate}. This concludes the proof.

\subsection{Zero power unitaries}\label{proofmlin}
In this section, we prove the characterization for zero power unitaries, i.e. $\mathcal{M}_{lin}(U)=0$ iff $[U^{\otimes 4},Q\Pi_{\sym}]=0$. Starting from Eq.\eqref{eq9}, note that:
\be
\tr([U^{\otimes 4},Q\Pi_{\sym}]^{2})=2\tr(Q\Pi_{\sym})-2\tr(U^{\dag\otimes 4}QU^{\otimes 4}Q\Pi_{\sym})
\ee
Thus:
\be
\tr(U^{\dag\otimes 4}QU^{\otimes 4}Q\Pi_{\sym})=\tr(Q\Pi_{\sym})-\frac{1}{2}\tr([U^{\otimes 4},Q\Pi_{\sym}]^{2})
\ee
recalling that $\tr(A^2)\equiv \norm{A}_{2}^{2}$, we can rewrite the linear non-stabilizing power as:
\be
\mathcal{M}_{lin}(U)=\frac{d}{2(4+d)\tr(Q\Pi_\sym)}\norm{[U^{\otimes 4},Q\Pi_{\sym}]}_{2}^{2}
\ee
therefore we have $\mathcal{M}_{lin}(U)=0$ iff $\norm{[U^{\otimes 4},Q\Pi_{\sym}]}_{2}^{2}=0$ and this happens iff $[U^{\otimes4},Q\Pi_{\sym}]=0$. 
This concludes the proof. 
\end{enumerate}

\subsection{Lower bound to the $T-$count: proof of Eq.(12)}\label{tcount}
To prove the bound on the $T-$count, we first find the relationship between non-stabilizing power and the unitary stabilizer nullity defined in\cite{jiang2021lower} as:
\be 
v(U)=2n -\log_{2}s(U)
\ee 
where s(U) is defined as the number of $\pm1$ of $\tr(P_{1}U^\dag P_2U)/d$ varying $P_1,P_2$ in the Pauli group. Let $St(U)$ be the set of Pauli operators that returns $\tr(P_1U^\dag P_2U)/d=\pm 1$, we can bound $s(U)$ as:
\be
s(U)=\sum_{P_1,P_2\in St(U)}|\tr(P_1U^\dag P_2U)/d|\equiv\sum_{P_1,P_2\in St(U)}(\tr(P_1U^\dag P_2U)/d)^4\le \sum_{P_1,P_2\in\mathcal{P}_n}(\tr(P_1U^\dag P_2U)/d)^4
\label{chaininequality}
\ee
Now, starting from Eq.\eqref{c4}:
\be
\mathcal{M}_{lin}(U)=1-\frac{1}{4+d}\left(4+\frac{(d+3)}{4d^2}\sum_{P_{1},P_{2}}\mathbb{E}_{\ket{\psi}}[\langle\psi|U^{\dag}P_{1}UP_{2}|\psi\rangle^4]\right)
\ee
we can upper bound $\mathcal{M}_{lin}(U)$ using the Jensen inequality for the average on pure states and obtain:
\be
\mathcal{M}_{lin}(U)\le1-\frac{1}{4+d}\left(4+\frac{(d+3)}{4d^2}\sum_{P_{1},P_{2}}(\mathbb{E}_{\ket{\psi}}[\langle\psi|U^{\dag}P_{1}UP_{2}|\psi\rangle])^{4}\right)=1-\frac{1}{4+d}\left(4+\frac{(d+3)}{4d^2}\sum_{P_{1},P_{2}}(\tr(U^{\dag}P_{1}UP_{2})/d)^{4}\right)
\ee
using Eq.\eqref{chaininequality} we relate the linear non-stabilizing power to the unitary stabilizer nullity as:
\be
\mathcal{M}_{lin}(U)\le 1-\frac{1}{4+d}\left(4+\frac{(d+3)}{4d^2}\sum_{P_{1},P_{2}}(\tr(U^{\dag}P_{1}UP_{2})/d)^{4}\right)=1-\frac{1}{4+d}\left(4+\frac{(d+3)}{4}2^{-\nu(U)}\right)
\ee
The interesting feature of the unitary stabilizer nullity is that, being submultiplicative with respect to the product of unitaries, it lower bounds the \textit{$T$-counts}\cite{jiang2021lower} $t(U)$, the minimum number of $T$-gates needed in addition to Clifford circuits to decompose the unitary operator $U$:
\be
\nu(U)\le t(U)
\ee
we can further upper bound the linear non-stabilizing power with the $T-$counts $t(U)$ as:
\be 
\mathcal{M}_{lin}(U)\le1-\frac{1}{4+d}\left(4+\frac{(d+3)}{4}2^{-t(U)}\right)
\ee 
Let us express the above bound in terms of $t(U)$:
\be
t(U)\ge -\log_2\left(d-(4+d)\mathcal{M}_{lin}(U)\right)+\log_2(d+3)-2
\ee
this bound is obviously less tight than the one for the nullity\cite{jiang2021lower}.

\subsection{Average of non-stabilizing power over the unitary group: proof of Eq.(10)}\label{proofavermpowlin}
First of all, note that the average over all the stabilizer states can be treated as the average over the Clifford group on a reference stabilizer state. Since we need to compute the average over the full unitary group, we can exploit the left/right invariance of the Haar measure and avoid averaging over the Clifford group and instead average only over the full unitary group. The result reads:
\ba 
\mathbb{E}_{U}[\mathcal{M}_{lin}(U)]&=&\averu{M_{lin}(U\ket{\psi})}=1-d\averu{\tr(Q U^{\otimes 4}\st{\psi}^{\otimes 4}U^{\dag\otimes 4})}\\
&=&1-\frac{d\tr{Q\Pi_{\sym}}}{\tr{\Pi_{\sym}}}=1-\frac{4}{d+3}\nonumber
\ea
where we used the well known result\cite{leone2021quantum}:
\be
\mathbb{E}_{U}[U^{\otimes 4}\ket{\psi}\bra{\psi}^{\otimes 4}U^{\dag\otimes 4}]=\frac{\Pi_{\sym}}{\tr\Pi_{\sym}}
\ee
Then from the Jensen inequality follows:
\be
\averu{\mathcal{M}_{2}(U)}\ge-\log(1-\mathbb{E}_{U}[\mathcal{M}_{lin}(U)])
\ee

\subsection{Tipicality of non-stabilizing power: an application of Levy's lemma}\label{prooftipmpow}
Levy's lemma is frequently employed in quantum mechanics to prove strong concentration bounds on scalar functions of unitary operator $U\in\mathcal{U}(d)$ and of quantum pure states $\ket{\psi}$, see for example\cite{popescu2006entanglement, hayden2006aspects, Oliviero2020random}. The game consists in bounding the Lipschitz constant $\eta_{f}$ of a function $f$ (cfr.\cite{watrous2018theory}): if one finds $\lim_{d\rightarrow\infty}\frac{\eta_{f}^{2}}{d}=0$, then Levy's lemma proves a strong concentration of $f$ around its average value, for large $d$. Let us bound the Lipschitz constant of $\mathcal{M}_{lin}(U)$:
\ba
|\mathcal{M}_{lin}(U)-\mathcal{M}_{lin}(V)|&=&\frac{d}{(4+d)\tr(\Pi_{\sym}Q)}|\tr(Q\Pi_{\sym}(U^{\otimes 4}QU^{\otimes 4}-V^{\otimes 4}QV^{\otimes 4}))|\nonumber\\
&\le &\frac{d}{(4+d)\tr(\Pi_{\sym}Q)}\norm{Q\Pi_{\sym}}_{1}\norm{U^{\otimes 4}QU^{\otimes 4}-V^{\otimes 4}QV^{\otimes 4}}_{\infty}\nonumber\\
&\le&\frac{2d}{(4+d)}\norm{Q}_{\infty}\norm{U^{\otimes 4}-V^{\otimes 4}}_{\infty}\nonumber\\
&\le& \frac{8d}{(4+d)}\norm{U-V}_{\infty}\nonumber\\
&\le &\frac{8d}{(4+d)}\norm{U-V}_{2}
\ea
where in the first inequality we used the property of the Schatten $p-$norms: $|\tr(AB)|\le\norm{A}_{p}\norm{B}_{q}$ with $p^{-1}+q^{-1}=1$; in the second inequality we employed the following fact:
\ba
\norm{U^{\dag\otimes 4}QU^{\otimes 4}-V^{\dag\otimes 4}QV^{\otimes 4}}_{\infty}&=&\norm{U^{\dag\otimes 4}Q(U^{\otimes 4}-V^{\otimes 4})+(U^{\dag\otimes 4}-V^{\dag\otimes 4})QV^{\otimes 4}}_{\infty}\nonumber\\
&\le&\norm{QU^{\otimes 4}}_{\infty}\norm{U^{\otimes 4}-V^{\otimes 4}}_{\infty}+\norm{QV^{\otimes 4}}_{\infty}\norm{U^{\dag\otimes 4}-V^{\dag\otimes 4}}_{\infty}\nonumber\\
&\le& 2\norm{Q}_{\infty}\norm{U^{\dag\otimes 4}-V^{\dag\otimes 4}}_{\infty}
\ea
and the fact that $\norm{Q\Pi_{\sym}}_1=\tr(Q\Pi_{\sym})$. In the third inequality we used $\norm{Q}_{\infty}=1$ and $\norm{U^{\otimes 4}-V^{\otimes 4}}_{\infty}\le 4\norm{U-V}_{\infty}$ and lastly, in the fourth inequality, the hierarchy of the Schatten $p-$norms, namely $\norm{A}_{p}\le \norm{A}_{q}$ with $p>q$.

We thus find the Lipschitz constant to be bounded by a $O(1)$ constant, namely $\eta\le \frac{8d}{4+d}$ and the Levy's lemma applies (cfr.\cite{Oliviero2020random}).

\section{State Synthesis}
In this section we are going to give the formal statement of the theorem in the main text.
\begin{theorem*}
With overwhelming probability $p=1-\expf{-Cd^{1-2\beta}}$ where $\beta> 0$, a number $r\ge \frac{\beta n}{1-\log_{2}3/2}$ copies of the magic state $\ket{H}=\frac{1}{\sqrt{2}}(\ket{0}+e^{i\pi/4}\ket{1})$ are necessary to synthesize a $n-$qubit Haar random state.
\end{theorem*}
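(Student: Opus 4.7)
The plan is to use the stabilizer $2$-Rényi entropy $M_{2}$ as a non-stabilizerness resource that cannot be amplified by Clifford operations, so that copies of the magic state are the only way to make it grow. The argument will have three main steps.

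First, I would show that any state $\ket{\phi}$ produced by a Clifford unitary acting on $r$ copies of the magic state $\ket{H}$ (padded with computational-basis ancillas to $n$ qubits) satisfies the monotonicity-type inequality
\begin{equation}
M_{2}(\ket{\phi}) \;\le\; r\, M_{2}(\ket{H}) \;=\; r\bigl(1 - \log_{2}(3/2)\bigr).
\end{equation}
This follows immediately from the Clifford invariance (property $(ii)$) and additivity (property $(iii)$) proved in Sec.~\ref{proofpropertiesalphamagic}, combined with the closed-form value of $M_{2}(\ket{H})$ obtained by setting $\alpha = 2$ in Eq.~\eqref{magicstaterenyientropy}, and using that $M_{2}(\ket{0}) = 0$ on the ancillas.

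Second, I would argue that a Haar random $n$-qubit state $\ket{\psi}$ typically has $M_{2}$ close to its maximal value. The exact Haar average $\averu{\mathcal{M}_{lin}(U)} = 1 - 4/(d+3)$, derived in Sec.~\ref{proofavermpowlin}, together with Haar invariance gives $\averu{M_{lin}(U\ket{0})} = 1 - 4/(d+3)$. Using the $O(1)$ Lipschitz bound for $M_{lin}$ established in Sec.~\ref{prooftipmpow}, I would invoke Levy's lemma to get a concentration bound of the form
\begin{equation}
\Pr\!\left[M_{lin}(\ket{\psi}) \le 1 - \tfrac{4}{d+3} - t\right] \;\le\; \exp(-C d t^{2}).
\end{equation}
Choosing $t = d^{-\beta}$ yields the tail probability $\exp(-C d^{1-2\beta})$ that appears in the statement.

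Third, I would translate the tail bound on $M_{lin}$ into a lower bound on $M_{2}$. Since $M_{2}(\ket{\psi}) = -\log\bigl(1 - M_{lin}(\ket{\psi})\bigr)$ and $4/(d+3) = o(d^{-\beta})$ for $\beta < 1$, the concentration inequality gives $M_{2}(\ket{\psi}) \ge \beta n - O(1)$ with probability at least $1 - \exp(-C d^{1-2\beta})$. Combining this with the resource inequality from the first step yields $r\bigl(1 - \log_{2}(3/2)\bigr) \ge \beta n$ up to vanishing corrections, and solving for $r$ produces the advertised lower bound.

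The main obstacle is in the second step: ensuring that the Lipschitz estimate from Sec.~\ref{prooftipmpow}, derived for $\mathcal{M}_{lin}(U)$ on $\mathcal{U}(d)$, transfers cleanly to $M_{lin}(U\ket{0})$ viewed as a function on the unitary manifold, and verifying that the choice $t \sim d^{-\beta}$ gives precisely the exponent $d^{1-2\beta}$ without hidden losses from constants or sub-leading $4/(d+3)$ terms. Once that concentration estimate is in place, the remaining arithmetic — combining additivity, the value of $M_2(\ket{H})$, and the conversion between $M_{lin}$ and $M_2$ — is routine.
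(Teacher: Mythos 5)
Your proposal follows essentially the same route as the paper: Clifford invariance plus additivity to cap the resource at $r\,M_{2}(\ket{H})=r(1-\log_{2}3/2)$, the Haar average and Levy-lemma concentration of $M_{lin}$ with deviation $t=d^{-\beta}$ to get $M_{2}(\ket{\psi}_{haar})\ge\beta n$ with probability $1-\expf{-Cd^{1-2\beta}}$, and the relation $M_{2}=-\log(1-M_{lin})$ to convert between the two. Your treatment of the subleading $4/(d+3)$ correction and the ancilla padding is, if anything, slightly more explicit than the paper's, but the argument is the same.
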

\begin{proof}
To synthesize a state $\ket{B}$ via stabilizer operations starting from a state $\ket{A}$, a necessary condition is that $M_{2}(\ket{A})\ge M_{2}(\ket{B})$ since the Clifford operations leave $M_2$ invariant. Here we prove that, with probability $p=1-\expf{-Cd^{1-2\beta}}$, the stabilizer $2-$R\'enyi entropy of a $n-$qubit random state $\ket{\psi}_{haar}$ is $M_{2}(\ket{\psi}_{haar})\ge \beta n$. On the other hand, from Eq. \eqref{magicstaterenyientropy}, we know that $M_{2}(\ket{H}^{\otimes r})=r(1-\log_{2}3/2)$ and thus, in order to have $M_{2}(\ket{H}^{\otimes r})\ge M_{2}(\ket{\psi}_{haar})$ one must (at least) have $M_{2}(\ket{H}^{\otimes r})\ge \beta n$. Thus one obtains a lower bound on the necessary copies of the magic state needed to synthesize a $n-$qubit Haar random state, i.e. $r\ge \frac{\beta n}{1-\log_{2}3/2}$. 

First of all, the average stabilizer linear entropy over the set of pure states reads:
\be
\bar{m}\equiv\mathbb{E}_{\ket{\psi}}[M_{lin}(\ket{\psi})]=1-\Theta(d^{-1})
\ee
see Eq. $(10)$. Then, from the typicality of the stabilizer linear entropy, picking a state $\ket{\psi}_{haar}$ sampled from the Haar distribution one has (see Eq. $(11)$):
\be
\operatorname{Pr}[|M_{lin}(\ket{\psi}_{haar})-\bar{m}|\ge \epsilon]\le \expf{-Cd\epsilon^{2}}
\ee
i.e. the stabilizer linear entropy of $\ket{\psi}_{haar}$ is $M_{lin}(\ket{\psi}_{haar})\ge 1-d^{-\beta}$ with probability $p=1-\expf{-Cd^{1-2\beta}}$. Setting $0<\beta<1/2$, we thus have that the probability is one up to an exponentially small error (in $d$). Consequently since $M_{2}(\ket{\psi}_{haar})=-\log (1-M_{lin}(\ket{\psi}_{haar}))$, the stabilizer $2-$R\'enyi entropy for a Haar random state:
\be
M_{2}(\ket{\psi}_{haar})\ge \beta n
\ee
with the same probability $p$. This concludes the proof.
\end{proof}

\section{non-stabilizing power and chaos}
\subsection{Average over the $k-$doped Clifford circuit}\label{proofkmpower}
The average over the $k-$doped Clifford circuit can be easily made by exploiting the techniques developed in\cite{leone2021quantum}:
\ba
\mathbb{E}_{\mathcal{C}_{k}}\left[\mathcal{M}_{lin}(U)\right]&=&1-a_{k}\tr{(QQ\Pi_{\sym})}-b_{k}\tr{(Q\Pi_{\sym})}\nonumber\\
&=&1-\frac{1}{3+d}\left(4+(d-1)\left(\frac{7d^2-3d +d(d + 3)\cos(4\theta)-8}{8(d^2-1)}\right)^{k}\right)\nonumber
\ea
where we used the average of $\st{\psi}^{\otimes 4}$ over the doped Clifford circuit (cfr. Application 3 of\cite{leone2021quantum}); for completeness, here we write down the coefficients $a_{k}$ and $b_{k}$ for a stabilizer state $\ket{\psi}\in\stab$ obtained in\cite{leone2021quantum}:
\ba
a_{k}&=&\frac{24 }{(d^2-1)(d+2)(d+4)}\left(\frac{(d+3)}{4}-1\right)\left(\frac{7d^2 - 3d +d(d + 3)\cos(4\theta)-8}{8(d^2-1)}\right)^{k}\\
b_{k}&=&\frac{1}{D_{\sym}}+\frac{24}{(d^2-1)(d+2)(d+4)}\left(\frac{4}{d(d+3)}-\frac{1}{d}\right)\left(\frac{7d^2 - 3d +d(d + 3)\cos(4\theta)-8}{8(d^2-1)}\right)^{k}\nonumber
\ea
\subsection{Relating the non-stabilizing power to the averaged fourth power of the $2-$OTOC: proof of Eq.(14)}\label{prooftheorem3}
Start from Eq. \eqref{eq9} and note that we can write $\Pi_{\sym}$ as:
\be
\Pi_{\sym}=\tr(\Pi_{\sym})\mathbb{E}_{\ket{\psi}} [\st{\psi}^{\otimes 4}]
\ee
where $\mathbb{E}_{\ket{\psi}}$ is the Haar measure over the set of pure states. Thus we rewrite Eq.(9) as:
\be
\mathcal{M}_{lin}(U)=1-\frac{1}{4+d}\left(4+\frac{d\tr(\Pi_{\sym})}{\tr(Q\Pi_{\sym})}\mathbb{E}_{\ket{\psi}}[\tr(U^{\dag\otimes 4} QU^{\otimes 4}Q\st{\psi}^{\otimes 4})]\right)
\ee
recalling that $Q\equiv d^{-2}\sum_{P\in\mathcal{P}_n}P^{\otimes 4}$, the above equation can be recast in the following form:
\be
\mathcal{M}_{lin}(U)=1-\frac{1}{4+d}\left(4+\frac{(d+3)}{4d^2}\sum_{P_{1},P_{2}}\mathbb{E}_{\ket{\psi}}[\langle\psi|U^{\dag}P_{1}UP_{2}|\psi\rangle^4]\right)
\label{c4}
\ee
finally $\aver{f(P)}_{P\in\mathcal{P}_n}\equiv\frac{1}{d^2}\sum_{P\in\mathcal{P}_n}f(P) $ is nothing but the average over the Pauli group $\mathcal{P}_n$ of the function $f(P)$.
\subsection{Tipicality of the $2-$OTOC with respect to $\ket{\psi}$: further application of Levy's lemma}\label{proofotoctyp}
In order to prove the above result we make use of the Levy's lemma\cite{watrous2018theory}. Let us compute the Lipschitz constant of the function $\otoc_{2}(\tilde{P}_1,P_2,\psi)\equiv\langle\psi|\tilde{P}_1P_{2}|\psi\rangle=\tr(\tilde{P}_1P_{2}\psi)$:
\ba
|\otoc_{2}(\tilde{P}_1,P_2,\psi)-\otoc_{2}(\tilde{P}_1,P_2,\phi)|&=&|\tr(\tilde{P}_1P_{2}(\psi-\phi))|\nonumber\\&\le&\norm{\tilde{P}_1P_2}_{\infty}\norm{\psi-\phi}_{1}\nonumber\\&=&2\sqrt{1-|\langle\psi|\phi\rangle|^2}\nonumber\\ &\le&2|\ket{\psi}-\ket{\phi}|
\ea
in the second equality we used the fact that $\norm{U}_{\infty}=1$ for any unitary operator. Thus we find that the Lipschitz constant $\eta\le 2$ and Levy's lemma applies, cfr.\cite{popescu2006entanglement}.
\subsection{The linear non-stabilizing power and the $8-$OTOC}\label{prooftheorem4}
Starting from Eq.(14) and using the fact that the $2-$point OTOC shows strong tipicality with respect to $\ket{\psi}$ we can write:
\be
\mathcal{M}_{lin}(U)\simeq 1-\frac{1}{4+d}\left(4+\frac{(d+3)d^2}{4}\Big\langle\left(\mathbb{E}_{\ket{\psi}}\left[\otoc_{2}(\tilde{P}_1,P_{2},\psi)\right]\right)^{4}\Big\rangle_{P_1,P_2}\right)
\ee
where $\simeq$ is an equality up to an exponentially small error in $d$. The Haar average of the $2-$point OTOC with respect to $\ket{\psi}$ reads:
\be
\mathbb{E}_{\ket{\psi}}\left[\otoc_{2}(\tilde{P}_1,P_{2},\psi)\right]=\mathcal{E}_{\ket{\psi}}\left[\tr(\tilde{P}_1P_{2}\psi)\right]=d^{-1}\tr(\tilde{P}_1P_{2})\equiv \otoc_{2}(\tilde{P}_1,P_2)
\ee
lastly, the following equation holds:
\be
\otoc_{8}(U)\equiv\langle d^{-1}\tr(\tilde{P}_1P_{2}P_3P_4\tilde{P}_{1}P_2P_4P_5\tilde{P}_1P_{2}P_5P_6\tilde{P}_1P_{2}P_6P_3)\rangle_{P_{3},P_{4},P_{5},P_{6}}=\otoc_{2}(\tilde{P}_1,P_{2})^{4}
\ee
where we used the fact that the average over the Pauli group reads:
\be
\aver{P_{3}\tilde{P}_1P_{2}P_{3}}_{P_3\in\mathcal{P}_n}=\frac{\bbbone}{d}\tr(\tilde{P}_1P_2)
\ee
\subsection{The average of the $8-$OTOC over the full unitary group}\label{aver8otoc}
In this section we compute the average over the full unitary group of the $8-$OTOC defined in the main text as:
\be
\aver{\otoc_8}\equiv \langle d^{-1}\tr(\tilde{P}_1P_{2}P_3P_4\tilde{P}_{1}P_2P_4P_5\tilde{P}_1P_{2}P_5P_6\tilde{P}_1P_{2}P_6P_3)\rangle_{P_{1},P_{2},P_{3},P_{4},P_{5},P_{6}}
\ee
In order to compute the average we recall the equality proved in the above section and
\be
\mathbb{E}_U[\aver{\otoc_8}]=\averu{\aver{\otoc_{2}(\tilde{P}_1,P_2)^{4}}_{P_1,P_2}}
\ee
by standard techniques we obtain:
\be
\mathbb{E}_U[\aver{\otoc_8}]=\frac{1}{d^4}\sum_{\pi\sigma}W_{\pi\sigma}\tr(Q T_{\pi})\tr(QT_{\sigma})=\frac{4(d^4-9d^2+6)}{d^6(d^2-9)}=\Theta(d^{-4})
\ee
where the projector $Q$ is defined as $Q=d^{-2}\sum_{P\in\mathcal{P}_n}P^{\otimes 4}$, $W_{\pi\sigma}$ are the Weingarten functions\cite{collins2003moments,collins2006integration} and $T_{\pi}$ are permutation operators corresponding to permutations $\pi\in \mathcal{S}_{4}$ the symmetric group of order $4$.
%